\pgfplotsset{compat=1.16}
\newenvironment{proofsketch}{%
  \proof}{\endproof}
\newcommand{\furl}[1]{\footnote{\url{#1} --- Accessed \today}}
\def\faa{\mathcal{F}\xspace}
\def\prot{\textsc{Panini}\xspace}
\newcommand{\cauth}[1]{
    \textit{Ch}_{\text{auth}}\ifx\yyy#1\yyy\else\left(#1\right)\fi
}
\newcommand{\canon}[1]{
    \textit{Ch}_{\text{anon}}\ifx\yyy#1\yyy\else\left(#1\right)\fi
}
\def\new{\color{black}}
\algnewcommand\algorithmicon{\textbf{on}}
\algnewcommand\algorithmicfrom{\textbf{from}}
\begin{document}

%%
%% The "title" command has an optional parameter,
%% allowing the author to define a "short title" to be used in page headers.
\title{\prot\ --- Anonymous Anycast and an Instantiation
    %This work was supported by funding of the Helmholtz Association (HGF) through the Competence Center for Applied Security Technology (KASTEL).
}
\subtitle{Extended Version}
%%
%% The "author" command and its associated commands are used to define
%% the authors and their affiliations.
%% Of note is the shared affiliation of the first two authors, and the
%% "authornote" and "authornotemark" commands
%% used to denote shared contribution to the research.
\author{
    Christoph Coijanovic\inst{1} \and 
    Christiane Kuhn\inst{2} \and
    Thorsten Strufe\inst{1}
}

\authorrunning{Coijanovic et al.}

\institute{
    Karlsruhe Institute of Technology \email{firstname.lastname@kit.edu} \and
    NEC Laboratories Europe \email{firstname.lastname@neclab.eu}
}
%%
%% By default, the full list of authors will be used in the page
%% headers. Often, this list is too long, and will overlap
%% other information printed in the page headers. This command allows
%% the author to define a more concise list
%% of authors' names for this purpose.

\maketitle

%%
%% The abstract is a short summary of the work to be presented in the
%% article.
\begin{abstract}
% What is the problem?
Anycast messaging (i.e., sending a message to an unspecified receiver) has long been neglected by the anonymous communication community.
% Why is this important?
An \emph{anonymous} anycast prevents senders from learning who the receiver of their message is,  allowing for greater privacy in areas such as political activism and whistleblowing.
% What do we contribute?
While there have been some protocol ideas proposed, formal treatment of the problem is absent.
Formal definitions of what constitutes anonymous anycast and privacy in this context are however a requirement for constructing protocols with provable guarantees.
In this work, we define the anycast functionality and use a game-based approach to formalize its privacy and security goals.
We further propose \prot, the first anonymous anycast protocol that only requires readily available infrastructure.
% Why are we confident?
We show that \prot allows the actual receiver of the anycast message to remain anonymous, even in the presence of an honest but curious sender. 
In an empirical evaluation, we find that \prot adds only minimal overhead over regular unicast: 
Sending a message anonymously to one of eight possible receivers results in an end-to-end latency of 0.76s.
\end{abstract}

\textit{
    This is the extended version of ``Panini --- Anonymous Anycast and an Instantiation'' published at ESORICS 2023.
    Compared to the standard version, it additionally contains a more in-depth introduction to provable privacy (\Cref{sec:bg:provable_privacy}) and symmetric encryption (\Cref{sec:bg:enc}), proofs of the relations between our defined privacy notions (\Cref{sec:prob:imp}), a pseudocode description of \prot (\Cref{alg:prot}), proof that \prot achieves message confidentiality (\Cref{thm:atu:mc}), and long-term latency measurements of Nym (\Cref{sec:eval:nym}).  
}
\section{Introduction}
\label{sec:intro}
% What's an (anonymous) anycast?
In an \emph{anycast}, messages are received by any one of a group of eligible receivers.
This communication pattern is widely used in domain name resolution and content delivery networks~\cite{RefAnycastStats}.
Because the actual receivers are not predetermined, anycast also lends itself naturally to anonymous communication:
Consider a group of political activists who fear retribution from the opposing regime.
The activists want to implement \emph{dead man's switches} among themselves, i.e., if anyone is caught, someone else is notified and can take over their duties.
If the arrested activist played a prominent role in the opposition, the regime will be particularly interested in her replacement.
We can derive two main requirements from sending the dead man's notification via anonymous anycast:
\begin{enumerate}
    \item No one (including the anycast sender) should be able to identify the receiver.
        This way, the captured activist cannot be forced to reveal her successor.
        Note that to hide this information, the receiver must be chosen non-deterministically.
    \item The set of possible receivers should be constrainable by the sender.
        This ensures that one of the activist's trusted allies becomes her successor.
\end{enumerate}
A third non-functional requirement is that sending an anonymous anycast should be as easy to set up as possible.
Any obstacles, such as the need to set up a server infrastructure, will limit adoption.

Of course, anonymous anycast is not limited to political activism.
Anonymous anycast is preferable to the much more common anonymous \emph{unicast}~\cite{RefVuvuzela,RefExpress,RefTor} in any setting where receiver information should be hidden from the sender:
In whistleblowing, anonymous anycast provides plausible deniability for the sender.
In online lotteries, distributing the winning token via anonymous anycast guarantees that the winner is chosen without bias.
In distributed computing, anonymous anycast protects the receiver from targeted denial of service attacks.

% Research gap/What we want to do
While not receiving the same amount of attention as anonymous unicast and multicast, there is some academic literature that addresses similar issues.
Mislove et al. mention that their AP3 protocol can be extended to support anycasts~\cite{RefAP3}.
A recent line of research~\cite{RefBBG,RefETF,RefRPIR} focuses on anonymously selecting committee members to receive messages. 
We see one major shortcoming in the related work:

Related work considers anonymous anycast from a protocol perspective.
To our knowledge, no work has focused on the formal aspects of the problem.
Without a formal understanding of the properties of anonymous anycast, it is difficult to compare current and future protocols when each defines its own `flavor' of anonymous anycast.
Thus, our goal is to provide a concrete definition of the anonymous anycast problem and to formally define the main privacy goals of an anonymous anycast system.

% Contributions
We identify Message Confidentiality, Fairness, and Receiver Anonymity as the main goals of anonymous anycast.
We formalize these goals using a game-based approach that is common in cryptography (e.g., with the IND-CPA notion from semantic security~\cite{RefIndCPA}) and already well established in anonymous unicast communication~\cite{RefNotions}.
Our game-based privacy goals are unambiguously defined and allow for rigorous analysis of anycast protocols.

In this paper, we also propose \prot, an anonymous anycast protocol, to show that our defined privacy goals are achievable by efficient protocols.
\prot relies on a readily available infrastructure:
An authenticated unicast channel (e.g., Signal\furl{https://signal.org}) and a unicast channel that unlinks senders from their messages (e.g., Nym\furl{https://nymtech.net}).
The unlinking unicast channel is used by possible receivers to send randomness to the anycast sender.
Based on this randomness, the sender can choose a receiver without learning its identity.

In summary, the main contributions of this paper are
\begin{itemize}
    \item The formalization of functionality and privacy goals in anonymous anycast
    \item The proposal of \prot, the first protocol that allows anonymous anycast over readily available infrastructure
    \item A security analysis of \prot, showing that it achieves our previously defined privacy goals.
    \item In-depth empirical evaluation of \prot including long-term latency measurements of Nym.
\end{itemize}

The rest of this paper is organized as follows:
\Cref{sec:related} introduces the related work in greater detail.
\Cref{sec:bg} presents the necessary background on provable privacy, linkable ring signatures, and Nym.
\Cref{sec:prob} contains our formal treatment of the anonymous anycast problem including definitions of the privacy goals.
\Cref{sec:prot} describes the \prot protocol.
\Cref{sec:eval} contains \prot's empirical evaluation.
Finally, \Cref{sec:conc} concludes the paper.

\section{Related Work}
\label{sec:related}
Recall our requirements for anonymous anycast:
(1) no entity (including the anycast sender) should learn who is receiving the anycast message, (2) the set of possible receivers should be restrictable, and (3) the anycast should be as easy to set up as possible.
In this section, we will present anonymous anycast-related work and discuss whether it meets these requirements.

\paragraph{Target-Anonymous Channels}
A recent line of work considers \emph{target-anonymous channels}~\cite{RefBBG,RefRPIR}.
Benhamouda et al. informally define a target-anonymous channel as one that allows ``anyone to post a message to an unknown receiver''~\cite{RefBBG}.
Both papers construct the target-anonymous channel using the same basic technique:
One protocol participant is chosen to select the receiver of the channel.
The participant then provides all other participants with a way to contact the receiver without revealing the receiver's identity to them.
Since the selecting participant inherently learns who the receiver will be in future uses of this target-anonymous channel, our first anonymous anycast requirement is not satisfied.

\paragraph{AP3}
AP3~\cite{RefAP3} is a mix network that implements the publish/subscribe communication pattern.
Publishers and subscribers are both connected through the mix network to a common root node.
The root node receives messages from the publishers and forwards them to the subscribers.
Mislove et al. do not discuss in detail how AP3 can be extended to provide anycast functionality.
We assume, based on the available information, that the root node randomly selects a subset of subscribers as actual receivers, rather than forwarding to all. 

% Comparison to Panini
In AP3, the anycast sender must trust the root node to perform the anycast correctly (e.g., not send the message to all users).
AP3's authors do not mention that the ability to subscribe to a publisher is limited.
Thus, there seems to be no way for the anycast sender to define the set of possible receivers.
So our second requirement is not satisfied.

\paragraph{Encryption to the Future}
Encryption to the Future (EtF)~\cite{RefETF,RefETF2,RefETF3} is a cryptographic primitive where messages can be encrypted for a given \emph{role}, rather than for a specific receiver.
Later, a lottery is held to determine who gets to hold the role and thus be able to decrypt the ciphertext.
To the best of our knowledge, suitable lottery primitives are all based on proof-of-stake blockchains~\cite{RefAnonLottery1,RefAnonLottery2}.

% Comparison to Panini
Even assuming the general availability of a suitable blockchain, the requirement that users acquire a cryptocurrency stake in order to receive an anycast message is a significant barrier to adoption.
Thus, our third requirement is not met.

\section{Background}
\label{sec:bg}
Next, we introduce the necessary background for the remainder of this paper.
This section is divided into background on provable privacy (\Cref{sec:bg:provable_privacy}), linkable ring signatures (\Cref{sec:bg:ring_signature}), {\new symmetric encryption (\Cref{sec:bg:enc}),} and the Nym anonymous communication protocol (\Cref{sec:bg:nym}).

\subsection{Provable Privacy}
\label{sec:bg:provable_privacy}
% Introduce provable privacy as a concept
When designing protocols that handle sensitive information, privacy and security must be carefully considered.
Rather than designing protocol mechanisms that seem reasonable and hoping for the best, it is desirable to have concrete proof that the protocol protects sensitive information.
To provide such proof, however, it is necessary to unambiguously define what information is to be protected.
In cryptography, such formal definitions of security have been established since the 1980s~\cite{RefProvableSecurity}.
Message confidentiality is formalized using indistinguishability games such as IND-CPA, IND-CCA1, and IND-CCA2~\cite{RefIndCPA}.
In these games, an adversary can choose two messages to send to a challenger.
The challenger chooses one of these messages at random and encrypts it.
The adversary must determine from the resulting ciphertext which of his messages was encrypted.
If it can be shown that there is no adversary who can determine the correct message with a non-negligible advantage over random guessing, then it has been proved that the ciphertexts of the encryption scheme do not reveal any information about the contained plaintexts.

% Unicast privacy notions
In privacy, there is a much wider variety of possible goals than in classical security:
Some protocols may focus on protecting the privacy of the sender, while others may consider the receiver, or both.
Indeed, many provable privacy frameworks have been proposed~\cite{RefNotions,RefProvablePrivacy1,RefProvablePrivacy2,RefProvablePrivacy3}.
We base our formalized anycast privacy on the work of Kuhn et al.~\cite{RefNotions}, as their framework can express all the previous goals and bases them on a common indistinguishability game.

% Unicast game
Similar to the two messages in the IND-CPA game, the adversary in the Kuhn et al. game may choose two scenarios, each consisting of a series of communications.
Each communication is defined by its sender, receiver, and message.
The adversary chooses a scenario at random and simulates the protocol execution of the enclosed communications.
The adversary receives any protocol output from the challenger that her abilities would allow her to learn from a real protocol run:
For example, adversaries that can \emph{globally} observe will receive data from every network link, while local observers will only receive data from links in their domain.
Based on this output, the adversary must decide which of her scenarios has been executed.

Different privacy goals are expressed by restrictions on how communication may differ between scenarios:
For example, consider a protocol that aims to protect only the activity of the sender.
If the adversary can distinguish between the scenarios based on information about the active receivers, she has an unfair advantage and can win the game, even though the protocol achieves its intended goal.
Thus, to accurately model the protocol's goal, the adversary is restricted to submitting scenarios where the communications \emph{only} differ in their senders, i.e., receivers and messages must be identical between scenarios.
If the adversary can still distinguish between scenarios, there must be some disclosure of sender activity, implying that the protocol does not meet its goal.

% (SM)L
A common goal of anonymous communication protocols is to unlink senders from their messages~\cite{RefTrellis,RefSabre,RefExpress,RefOrgAn}.
Kuhn et al. formalize this goal in the privacy notion of Sender-Message Pair Unlinkability $(SM)\overline{L}$.
Intuitively, a protocol that achieves $(SM)\overline{L}$ can reveal which senders are active and even which messages are being sent, but not who is sending which message.
Consider the example scenarios presented in \Cref{tab:sml}:
In each scenario, the same senders ($A$ and $C$) are active, and the same messages ($m_1$ and $m_2$) are sent to the same receiver ($B$).
The only difference between the scenarios is who sends which message.
For each scenario, an alternative instance is introduced that reverses the order of communication.
Instances ensure that the adversary cannot distinguish between scenarios based on who sends first.
The adversary randomly selects an instance of a scenario and simulates the protocol with the communications it contains.
Note that the adversary only has to determine the selected scenario, not the instance.

\begin{table}[]
    \centering
    \begin{tabularx}{0.65\columnwidth}{lXX}\toprule
        & \textbf{Scenario 0} & \textbf{Scenario 1}\\\midrule
         \textbf{Instance 0} & \begin{tabular}{c}$A\xrightarrow{m_1} B$\\$C\xrightarrow{m_2} B$\end{tabular} & \begin{tabular}{c}$A\xrightarrow{m_2}B$\\$C\xrightarrow{m_1}B$\end{tabular}\\
         \textbf{Instance 1} & \begin{tabular}{c}$C\xrightarrow{m_2} B$\\$A\xrightarrow{m_1} B$\end{tabular} & \begin{tabular}{c}$C\xrightarrow{m_1}B$\\$A\xrightarrow{m_2}B$\end{tabular}\\\bottomrule
    \end{tabularx}
    \caption{
        Valid scenarios for Sender-Message Pair Unlinkability $(SM)\overline{L}$.
        $A\xrightarrow{m}B$ denotes that user $A$ sends message $m$ to user $B$.
    }
    \label{tab:sml}
\end{table}

\subsection{Linkable Ring Signatures}
\label{sec:bg:ring_signature}
\emph{Linkable ring signature} schemes{\new~\cite{RefFirstLRS}} are an extension of standard digital signature schemes:
Compared to standard digital signatures, ring signatures allow verification against a set of multiple verification keys.
Verification succeeds if the signature was made with one of the corresponding secret keys.
The linkability property introduces an additional algorithm that determines whether two signatures were created using the same secret key.

Like standard signature schemes, linkable ring signatures provide \emph{unforgeability}, i.e. valid signatures can only be made by users who have a secret key that matches a verification key in the ring.
Like all ring signature schemes, linkable ring signatures also provide \emph{signer anonymity}, i.e. the identity of the actual signer cannot be determined, with a non-negligible advantage over random guessing from the signature or verification process (as long as the signer's secret key is not known to the adversary).
Finally, \emph{linkability} guarantees that the linking algorithm does not return false-positive or false-negative results.
For a more formal definition of these properties, see Liu et al.'s model of a linkable ring signature system~\cite{RefLRSDef}.
Specifically, a linkable ring signature scheme consists of the following algorithms~\cite{RefRingSig2}:
\begin{itemize}
    \item \textsc{Sig.Setup}$(1^\lambda) \to pp$: 
        On input of security parameter $1^\lambda$, return public parameters $pp$.
    \item \textsc{Sig.KeyGen}$(pp) \to (vk,sk)$:
        On input of public parameters $pp$, returns a pair of public and secret key $(vk, sk)$.
    \item \textsc{Sig.Sign}$(sk,m,R) \to \sigma$:
        On input of a secret key $sk$, a message $m$, and a ring $R = \{vk_0,\dots,vk_\ell\}$, output a signature $\sigma$.
    \item \textsc{Sig.Verify}$(\sigma,m,R) \to \{0,1\}$:
        On input of a signature $\sigma$, a message $m$, and a ring $R = \{vk_0,\dots,vk_\ell\}$, output:
            \begin{itemize}
                \item 1 (accept), iff $\sigma$ was generated by executing \textsc{Sign}$(sk,m,R)$, where $sk$ corresponds to some public key in $R$.
                \item 0 (reject), else.
            \end{itemize}
    \item \textsc{Sig.Link}$(\sigma,\sigma^\prime) \to \{0,1\}$:
        On input of two signatures $\sigma$ and $\sigma^\prime$ output 1, iff $\sigma$ and $\sigma^\prime$ were created using the same secret key and 0 otherwise.
\end{itemize}

{\new%
\subsection{Symmetric Encryption}
\label{sec:bg:enc}
Symmetric ciphers such as AES~\cite{RefAES} are ubiquitous in modern communication because they provide confidentiality with low overhead.
A symmetric cipher consists of the following algorithms:
\begin{itemize}
    \item \textsc{Cipher.KeyGen}\((1^\lambda) \to k\):
        On input of security parameter \(1^\lambda\), return a key \(k\).
    \item \textsc{Cipher.Enc}\((m,k) \to c\):
        On input of plaintext \(m\) and key \(k\), return ciphertext \(c\).
    \item \textsc{Cipher.Dec}\((c,k) \to m\):
        On input of ciphertext \(c\) and key \(k\), return plaintext \(m\).
\end{itemize}
}

\subsection{Nym}
\label{sec:bg:nym}
Most proposed anonymous communication networks exist only on paper, and do not provide a public instance for people to use.
A recent exception is Nym~\cite{RefNym}, which provides both client software to download and servers to connect to\furl{https://nymtech.net}.

{\new While Nym has not yet been subjected to much scientific scrutiny, it adopts the communication architecture of the well-established Loopix mix network design~\cite{RefLoopix}.}
In mix networks, messages are wrapped in multiple layers of encryption and sent through a series of mix nodes.
Each node removes the outermost layer of encryption before passing it on to the next node in the path.
After the last layer of encryption is removed, the message is forwarded to the intended receiver.
In addition to decryption, mix nodes also delay the forwarding of messages for a random amount of time.
Nym also uses cover traffic from both clients and servers to 1) hide communication patterns, 2) detect denial of service attacks, and 3) ensure that mix nodes have a sufficient amount of alternative traffic in which to hide messages.
The goal of a mix network is to unlink messages from their senders.
The layered encryption ensures that they cannot be linked based on message content, while the randomized delays ensure that they cannot be linked based on timing (given a sufficient amount of alternative messages traversing the mix node).
As long as there is at least one honest mix node between sender and receiver, this goal (and thus $(SM)\overline{L}$) is achieved.

{\new%
Nym differentiates itself from Loopix by introducing a blockchain-based system that allows users to anonymously pay for access to the system, and rewards nodes for mixing.
The payment system is optional and not yet deployed in the production version.
While this addition of payments may provide real-world benefits, it is independent of the actual communication infrastructure, is not relevant to our use of the system, and is therefore not considered in the remainder of this paper.`
}

Compared to Tor~\cite{RefTor}, the best-known anonymous communication network, Nym's use of randomized delays combined with additional cover traffic ensures unlinkability even against global observers.
Tor's vulnerability to traffic analysis by global observers is well documented~\cite{RefTorSurvey}, making Nym preferable in environments where such an adversary may be present.

\section{Problem Definition}
\label{sec:prob}
In this section, we consider the anonymous anycast problem from a formal perspective.
\Cref{sec:prob:func} defines which functionality an anycast protocol has to provide to be considered \emph{correct}.
\Cref{sec:prob:adv} introduces our assumed adversary model.
\Cref{sec:prob:goals} contains game-based formalizations of anycast privacy goals.

% Notation
In the following, we use $X \subset_i Y$ to express that set $X$ is a strict subset of set $Y$ consisting of $i$ elements.
$X\subseteq_i Y$ is used analogously.
Further, $|X|$ expresses the number of elements in $X$.
We use $U$ to denote the set of all protocol participants.

\subsection{Functionality}
\label{sec:prob:func}
An anonymous anycast is a protocol between $q = |U|$ users.
The anycast's \emph{sender} selects a set of $l\leq q$ \emph{possible receivers} $U_p$, of which $n\leq l$ shall receive the anycast message.
The anycast functionality then selects $n$ \emph{actual receivers} $U_a$ out of the set of possible receivers at random and forwards the message to them.
This functionality can be trivially provided by a trusted third party $\faa$.
\Cref{def:if} describes $\faa$'s behavior and bases anycast correctness on equivalence to it. 

\begin{definition}[Anonymous Anycast Correctness]
    \label{def:if}
    The anonymous anycast functionality $\faa$ interacts with a set of $q$ users $U = \{u_1$, $\dots$, $u_q\}$.
    It behaves as follows:
    \begin{enumerate}
        \item $\faa$ waits for input of form $(m,n,U_p)$ from sender $u_s\in U$.
            $m$ denotes the message, $U_p \subseteq_l U$ the set of \emph{possible} receivers and $n\in\{1,\dots,l\}$ the requested number of actual receivers.
        \item $\faa$ selects $U_a\subset_n U_p$ uniformly at random.
        \item $\forall u\in U_a$: $\faa$ sends $m$ to $u$.
    \end{enumerate}
    An anonymous anycast protocol is \emph{correct} if it provides the same functionality as $\faa$.
\end{definition}

Following \Cref{def:if}, we consider a multicast (i.e., $n=l$/all possible receivers are selected as actual ones) a special form of anycast.

\subsection{Adversary Assumptions}
\label{sec:prob:adv}
We assume an adversary $\mathcal{A}$ who can globally observe any network link, as well as actively interfere (i.e., drop, delay, modify, insert, and replay) with arbitrary packets. 
We further assume that $\mathcal{A}$ can corrupt the sender as well as a fraction of possible receivers.
We assume that corrupted users are honest but curious. 
We exclude arbitrarily malicious users, since they can trivially bypass any protocol's protection mechanism and send the anycast message directly to a receiver of their choice.

\subsection{Privacy Goals}
\label{sec:prob:goals}
In general, we must assume that both senders and receivers of an anycast are of interest to an adversary.
However, to express sender-related privacy goals, existing notions of privacy for unicast communication can be used~\cite{RefNotions}.
Thus, when considering the privacy goals of anonymous anycast, we focus on the receiver side.
We propose the following three main goals for our anycast setting:
\begin{enumerate}
    \item \textbf{Message Confidentiality ($MC$).}
        Outside of the sender and actual receivers, nobody shall learn information about the content of the anycast message.
    \item \textbf{Receiver Anonymity ($RA$).}
        Any adversary shall only learn trivial information about the \emph{actual} receivers. 
        Trivial information includes, for example, that a user learns that she is an actual receiver.
    \item \textbf{Fairness $(F)$.}
        Any possible receiver shall be chosen as the actual receiver with the same probability, except for negligible deviations.
        In a protocol without fairness, an adversary learns that some users are more likely to receive the anycast message, even without observation.
\end{enumerate}
While these informal descriptions of our privacy goals give a good intuition of the information that should be protected, stating them informally is not sufficient to prove that a protocol achieves them.
{\new Thus, we use a game-based approach to formalize our privacy goals next.}

{\new%
Our games have a common structure:
They are played between a challenger \(\mathcal{C}\) and an adversary \(\mathcal{A}\).
The challenger internally simulates the anycast protocol \(\Pi\).
\(\mathcal{A}\) can provide input (a \enquote{challenge}) to the protocol and receives its output from \(\mathcal{C}\).
Based on the output, \(\mathcal{A}\) must determine some information about the protocol execution.
}

\paragraph{Formalizing Message Confidentiality}
To formalize message confidentiality, we build on Kuhn et al.'s privacy game (see \Cref{sec:bg:provable_privacy}).
We need to make the following modifications to adapt to the anycast setting:
\begin{enumerate}
    \item While unicast communications are defined by a sender, a message, and a single receiver, an anycast message is defined by the sender, the message, the number of intended receivers, and the set of possible receivers.
    Thus, communications are expressed as tuples $(s,m,n,U_p)$.
    \item We assume that $\mathcal{A}$ is able to corrupt both the sender and a fraction of the receivers.
    Note that the anycast sender trivially learns the content of the message.
    Possible receivers also learn the message content if they are selected as actual receivers.
    To give the protocol a fair chance of achieving message confidentiality, $\mathcal{C}$ only provides protocol output to $\mathcal{A}$ if the sender and the actual receivers are not corrupted.
\end{enumerate}
The resulting game $\mathcal{G}_{MC}$ for anycast protocol $\Pi$ proceeds as follows:
\begin{enumerate}
    \item $\mathcal{C}$ selects a challenge bit $b\in\{0,1\}$ uniformly at random
    \item $\mathcal{A}$ submits a challenge $Ch = (s,\{m_0,m_1\},n,U_p)$
    \item $\mathcal{C}$ simulates the anycast protocol $\Pi$'s execution of ($s$, $m_b$, $n$, $U_p$) and saves the set of actual receivers $U_a$ as well as $\Pi$'s output $\Pi$($s$, $m_b$, $n$, $U_p$)
    \item If $\forall u \in U_a \cup \{s\} : u$ is \emph{not} corrupted, $\mathcal{C}$ forwards $\Pi$($s$, $m_b$, $n$, $U_p$) to $\mathcal{A}$
    \item $\mathcal{A}$ submits her guess $b^\prime \in\{0,1\}$ to $\mathcal{C}$.
        $\mathcal{A}$ wins if $b = b^\prime$ and looses otherwise.
\end{enumerate}

Analogous to Kuhn et al.'s game, steps (2-4) can be repeated an arbitrary number of times (with different challenges) to allow $\mathcal{A}$ to adapt its strategy based on its observations.

We say that a protocol $\Pi$ achieves message confidentiality if there is no probabilistic polynomial-time algorithm $\mathcal{A}$ that can win $\mathcal{G}_{MC}$ with a non-negligible advantage over random guessing.
Since there are two possible values for $b$, random guessing has a probability of success of \nicefrac{1}{2}.

\paragraph{Receiver Anonymity}
At first glance, one could define a receiver anonymity game analogous to the message confidentiality game:
$\mathcal{A}$ submits two possible sets of actual receivers and must decide which one was chosen by the challenger.
However, since actual receivers are supposed to be chosen non-deterministically, they cannot be set by the challenger.
We adapt the game model so that the adversary has to make a guess for an actual receiver instead of making a binary decision.

Note that, as with message confidentiality, user corruption may allow $\mathcal{A}$ to trivially determine an actual receiver.
If one of the users corrupted by $\mathcal{A}$ receives the anycast message, $\mathcal{A}$ unambiguously learns that this user was chosen as the actual receiver.
If $\mathcal{A}$ has corrupted all but $n$ users, and none of them receives the anycast message, then all remaining users must be actual receivers.
Thus, the challenger must check if one of these conditions is true after the actual receivers have been selected, and stop the game accordingly.

The complete $\mathcal{G}_{RA}$ game proceeds as follows:
\begin{enumerate}
    \item $\mathcal{A}$ submits a challenge $Ch = (s,m,n,U_p)$
    \item $\mathcal{C}$ simulates the protocol $\Pi$'s execution of $Ch$ and saves the chosen actual receivers $U_a$.
        $\mathcal{C}$ checks if $\mathcal{A}$ can trivially win due to user corruption.
        This is the case if there is a corrupted $u\in U_a$, or if \emph{all} $u\in U_p\cap U_a$ are corrupted.
        In case of a trivial win, $\mathcal{C}$ discards $\Pi$'s output $\Pi(s,m,n,U_p)$.
        Otherwise, $\Pi(s,m,n,U_p)$ is forwarded to $\mathcal{A}$.
    \item $\mathcal{A}$ can either choose to (a) unveil the challenge or (b) submit her guess $u^\ast \in U$
        \begin{enumerate}
            \item If $\mathcal{A}$ requested to unveil the challenge, $\mathcal{C}$ forwards $U_a$ to $\mathcal{A}$
            \item If $\mathcal{A}$ submitted her guess, $\mathcal{C}$ checks if $u^\ast \in U_a$.
                If so, $\mathcal{A}$ wins and loses otherwise.
        \end{enumerate}
\end{enumerate}

To allow $\mathcal{A}$ to adapt her strategy, steps (1-3) can be repeated a polynomial number of times as long as $\mathcal{A}$ chooses to unveil the challenge.
Once $\mathcal{A}$ has submitted her guess, the game ends.
See \Cref{fig:gra} for a visualization of $\mathcal{G}_{RA}$.

\begin{figure}
    \centering
    \begin{tikzpicture}[
    msg/.style={midway,fill=white,font=\tiny,inner sep=1},
    cmp/.style={draw,fill=white,align=left,font=\tiny,inner sep=2},
    scale=0.7
]
    \node[font=\large] (c) at (0,7) {$\mathcal{C}$};
    \node[font=\large] (a) at (6,7) {$\mathcal{A}$};
    \draw[-|] (c) -- (0,1.75);
    \draw[-|] (0,1.5) -- (0,0.5);
    \draw[] (a) -- (6,2.5);
    
    \draw[-|] (6,2.5) to[out=-135,in=90] (5.75,2.25) -- (5.75,1.75);
    \draw (6,2.5) to[out=-45,in=90] (6.25,2.25) -- (6.25,0.5);
    \node[fill=white,inner sep=1,draw,circle,font=\footnotesize] (a0) at (6,2.5) {\textbf{or}};
    
    \node[cmp,xshift=0.47cm] (c0) at (0,6.4) {$b \gets^R \{0,1\}$};
    \draw[->] (6,6) -- node[msg] {$\textit{Ch} = (s,m,n,U_p)$} ++(-6,0);
    
    \draw[fill=black!10] (2,5.75) rectangle node[midway,font=\large] {$\Pi$} (4,4.75);
    \draw[->] (0,5.5) -- node[msg] {\textit{Ch}} ++(2,0); 
    \draw[->] (2,5) -- node[msg] {$U_a, \Pi(\textit{Ch})$} ++(-2,0); 
    
    \node[cmp,xshift=1.17cm] (c1) at (0,4) {
        \textbf{if} $\exists\: \text{cor. } u \in U_a$\\
        \textbf{or} $\forall\: u \in U_p\cap U_a : u \text{ cor.}$:\\
        $\Pi(Ch) \gets \bot$
    };
    
    \draw[->] (0,3) -- node[msg] {$\Pi(\textit{Ch})$} ++(6,0);
    
    \draw[->] (5.75,2) -- ++(-5.75,0);
    \node[fill=white,font=\tiny,inner sep=1] (mid1) at (3,2) {$u\in U_p$};
    
    \draw[->] (6.25,1.25) -- ++(-6.25,0);
    \node[fill=white,font=\tiny,inner sep=1] (mid2) at (3,1.25) {\texttt{unveil}};
    
    \draw[->] (0,0.75) -- ++(6.25,0);
    \node[fill=white,font=\tiny,inner sep=1] (mid3) at (3,0.75) {$U_a$};
    
    \draw[->] (6.25,0.5) -- (6.5,0.5) -- (6.5,6.25) -- (6,6.25);
\end{tikzpicture}
    \caption{
        Game $\mathcal{G}_{RA}$ formalizing receiver anonymity.
    }
    \label{fig:gra}
\end{figure}
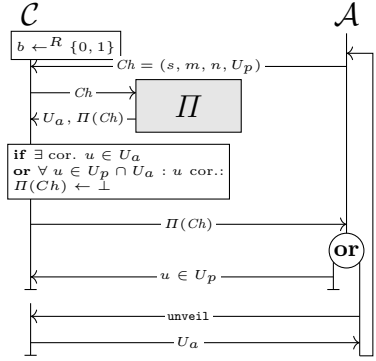

Analogous to message confidentiality, $\Pi$ achieves receiver anonymity if there is no probabilistic polynomial-time algorithm $\mathcal{A}$ that can win $\mathcal{G}_{RA}$ with a non-negligible advantage over random guessing.
Note that random guessing gives a probability of success of $\nicefrac{n}{|U_p|}$, not $\nicefrac{1}{2}$.

\paragraph{Formalizing Fairness}
Fairness is closely related to receiver anonymity:
If the protocol is not fair and favors some receivers over others, this information can be used by $\mathcal{A}$ to gain an advantage in $\mathcal{G}_{RA}$.
However, a protocol could be perfectly fair but fail to achieve receiver anonymity;
Consider a toy protocol that chooses the actual receivers uniformly at random, but then announces them publicly.

If the protocol is not fair, $\mathcal{A}$ should have an advantage in guessing the actual receivers without relying on the protocol output.
Thus, $\mathcal{G}_F$ differs from $\mathcal{G}_{RA}$ only in that $\mathcal{A}$ must submit her guess with the challenge, \emph{prior} to receiving the protocol output.
The resulting game is similar to the EUF-CMA game used to test the unforgeability of digital signatures~\cite{RefEUFCMA}.

{\new%
\paragraph{User Corruption}
As described in \Cref{sec:prob:adv}, the adversary has the ability to corrupt the anycast sender as well as a fraction of the possible receivers.
This ability is implemented in the games via a special challenge:
Instead of sending a challenge \textit{Ch}, \(\mathcal{A}\) can send a \emph{corruption query} specifying which users to corrupt in future runs.
In response, \(\mathcal{C}\) returns the internal state of the specified users.
The protocol output \(\Pi(\textit{Ch})\) may also change in future runs.
}

\subsection{Implications Between Notions}
\label{sec:prob:imp}
% What are the implications are why do we care?
We have noted above that a protocol that achieves receiver anonymity has to be fair, but a fair protocol does not necessarily achieve receiver anonymity.
Thus, there is an interesting asymmetric relation between the two notions.
We say that there is an \emph{implication} between two notions if any protocol that achieves the implying notion also achieves the implied one.
Conversely, if there is no implication between two notions, there exists a protocol that achieves one but not the other.
Being aware of these implications is especially valuable when analyzing concrete protocols, as it reduces the number of notions that need to be considered.

% How to prove implications
To prove that notion $X$ implies notion $Y$, one assumes that there is a protocol that achieves $Y$, but not $X$.
Next, one has to show that any attack that an adversary could execute for $Y$ is also valid for $X$.
It follows then that this adversary can also break $X$, which contradicts the initial assumption.
Thus, there cannot be an adversary who can only break $Y$ but not $X$, so $X$ indeed implies $Y$.

\paragraph{Implications}
% RA -> F
There is only one implication between our three anycast privacy notions:
Receiver anonymity implies fairness.
Intuitively, this makes sense as both are based on the same game and only differ in when they provide information to the adversary.

\begin{theorem}
    \label{thm:imp1}
    Receiver anonymity implies fairness.
\end{theorem}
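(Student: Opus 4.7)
The plan is to prove the implication by contraposition: assume a protocol $\Pi$ that does \emph{not} achieve fairness, and construct an adversary that breaks receiver anonymity. The key observation is that the two games are almost identical, with $\mathcal{G}_F$ differing from $\mathcal{G}_{RA}$ only in denying the adversary the protocol output $\Pi(\textit{Ch})$ before the guess is committed. Since any successful strategy against $F$ works without this output, any such strategy can be executed verbatim inside $\mathcal{G}_{RA}$, which only supplies \emph{more} information to the adversary.

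Concretely, I would first fix a PPT adversary $\mathcal{A}_F$ that wins $\mathcal{G}_F$ with advantage $\varepsilon$ that is non-negligible relative to the random-guessing baseline $\nicefrac{n}{|U_p|}$. I would then define a reduction adversary $\mathcal{A}_{RA}$ for $\mathcal{G}_{RA}$ as follows: $\mathcal{A}_{RA}$ internally runs $\mathcal{A}_F$, forwards $\mathcal{A}_F$'s corruption queries and challenges $\textit{Ch} = (s,m,n,U_p)$ to its own challenger $\mathcal{C}$, discards the output $\Pi(\textit{Ch})$ received from $\mathcal{C}$ without passing it to $\mathcal{A}_F$, and forwards $\mathcal{A}_F$'s guess $u^\ast$ as its own guess. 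Since both games invoke the same protocol $\Pi$ on the same challenge, the distribution over $U_a$ that determines the win condition is identical in the two games, and both games apply the same trivial-win check so corruption-based filtering is consistent. Hence $\Pr[\mathcal{A}_{RA} \text{ wins } \mathcal{G}_{RA}] = \Pr[\mathcal{A}_F \text{ wins } \mathcal{G}_F]$, giving $\mathcal{A}_{RA}$ the same non-negligible advantage $\varepsilon$ and contradicting the assumption that $\Pi$ achieves $RA$.

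The main subtlety I would need to handle carefully is the adaptive structure of $\mathcal{G}_{RA}$, in which the adversary may issue several \texttt{unveil} queries before committing to a guess. This is not an obstacle but rather an asymmetry that works in our favour: $\mathcal{A}_{RA}$ simply chooses to never unveil and to submit its guess immediately, which is a strict subset of its allowed strategies. A second point worth noting explicitly is that the random-guessing baseline is the same in both games (namely $\nicefrac{n}{|U_p|}$, computed over the same $U_p$ chosen by $\mathcal{A}_F$), so the notion of \emph{advantage} transfers cleanly across the reduction and no rescaling is needed.
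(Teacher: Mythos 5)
Your proposal is correct and follows essentially the same route as the paper's proof: both arguments hinge on the single observation that $\mathcal{G}_F$ and $\mathcal{G}_{RA}$ differ only in when the protocol output is available, so a fairness adversary can be run inside $\mathcal{G}_{RA}$ by discarding $\Pi(\textit{Ch})$ and otherwise behaving identically. Your write-up is merely more explicit about the reduction (forwarding corruption queries, skipping \texttt{unveil}, and noting that the guessing baseline $\nicefrac{n}{|U_p|}$ is unchanged), which are details the paper leaves implicit.
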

\begin{proof}
    Assume a protocol $\Pi$ that achieves fairness, but not receiver anonymity.
    Recall that $\mathcal{G}_{F}$ and $\mathcal{G}_{RA}$ only differ in that the adversary has to submit her guess before receiving protocol output in $\mathcal{G}_F$.
    If there is an adversary $\mathcal{A}$ who can break fairness for $\mathcal{G}_F$, $\mathcal{A}$ can also break receiver anonymity by discarding the protocol output and behaving identically otherwise. 
    Thus any attack that breaks $\mathcal{G}_F$ also works in $\mathcal{G}_{RA}$ which contradicts the assumption.
\end{proof}

\paragraph{Non-Implications}
% Non-implications
To ensure that all implications have been found, one also has to prove that there are \emph{no} implications between all remaining pairs of notions.
To construct a proof that $X$ does not imply $Y$, one proceeds as follows:
\begin{enumerate}
    \item Assume a protocol $\Pi$ that achieves notion $Y$
    \item Construct a protocol $\Pi^\prime$ that behaves identically to $\Pi$, except for the disclosure of some information that is protected by $X$ but not by $Y$.
    \item $\Pi^\prime$ still achieves $Y$, as the disclosed information cannot be used in $Y$'s game to gain an advantage by definition.
    \item $\Pi^\prime$ does not achieve $X$, as the disclosed information enables the adversary to win $X$'s game.
\end{enumerate}
As $\Pi^\prime$ is a protocol which achieves $Y$ but not $X$, $X$ cannot imply $Y$.

\begin{theorem}
    There are no relations among the anycast privacy notions except the one stated in \Cref{thm:imp1}.
\end{theorem}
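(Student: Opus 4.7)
My plan is to follow the non-implication template spelled out right before the theorem, applied to each of the five pairs left uncovered by \Cref{thm:imp1}. Since the three notions are $MC$, $RA$, $F$, and the only established implication is $RA \Rightarrow F$, the pairs to handle are $MC \not\Rightarrow RA$, $MC \not\Rightarrow F$, $RA \not\Rightarrow MC$, $F \not\Rightarrow MC$, and $F \not\Rightarrow RA$. For each one I start from an arbitrary protocol $\Pi$ achieving the would-be implied notion and graft on a small modification that either leaks an extra value or perturbs the receiver selection in a way that is invisible inside the implied notion's game but fatal inside the other's.

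For $RA \not\Rightarrow MC$ and $F \not\Rightarrow MC$ I let $\Pi'$ behave exactly like $\Pi$ but additionally publish the plaintext $m$ alongside the regular output. In both $\mathcal{G}_{RA}$ and $\mathcal{G}_F$ the adversary fixes $m$ inside her challenge, so receiving it back is useless; hence $\Pi'$ still satisfies the respective notion. In $\mathcal{G}_{MC}$ the challenger encrypts $m_b$, and the leaked copy of $m_b$ lets $\mathcal{A}$ read off $b$ with probability~$1$, so $MC$ breaks. For $MC \not\Rightarrow RA$ and $F \not\Rightarrow RA$ I instead let $\Pi'$ append the set $U_a$ of actual receivers to the protocol output. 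Appending $U_a$ has no effect on the messages being encrypted, so $MC$ is preserved; it also has no effect on the selection distribution, so $F$ is preserved; but in $\mathcal{G}_{RA}$ the adversary can parse $U_a$ out of $\Pi(Ch)$ and set $u^\ast$ to any of its elements, winning deterministically. The $F \not\Rightarrow RA$ case is exactly the toy counterexample the paper previews ("chooses the actual receivers uniformly at random, but then announces them publicly").

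For $MC \not\Rightarrow F$ the disclosure-style construction is not quite right, so I instead modify the selection step. Starting from any $\Pi$ achieving $MC$, I define $\Pi'$ to pick the lexicographically smallest $n$ elements of $U_p$ as $U_a$, leaving the rest of the protocol verbatim. Confidentiality of the transmission channel is untouched, so $\Pi'$ still satisfies $MC$: for any challenge in which $\mathcal{C}$ actually forwards output (i.e. the sender and all of $U_a$ are honest), the view given to $\mathcal{A}$ is the same as in $\Pi$ up to bookkeeping, so no $MC$-adversary gains any advantage. In $\mathcal{G}_F$, however, $\mathcal{A}$ can submit any $U_p$ whose smallest $n$ elements differ from a uniform draw (e.g.\ take $n=1$ and $|U_p|\ge 2$) and guess the lexicographic minimum, winning with probability~$1$, well above the uniform baseline $n/|U_p|$.

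The only step I expect to require real care is the $MC \not\Rightarrow F$ argument, because here the modification changes an internal protocol step rather than merely appending transcript data, and I have to argue that this perturbation does not sneak any information about $m_b$ into the view $\mathcal{C}$ forwards in $\mathcal{G}_{MC}$. The argument is that $U_a$ in $\Pi'$ is a deterministic function of $U_p$ and $n$, both chosen by $\mathcal{A}$, so the selection carries zero information about $b$; combined with the abort condition that prevents any corrupted member of $U_a$ from being a sink for $m_b$, this reduces $MC$ for $\Pi'$ to $MC$ for $\Pi$. All other cases are pure disclosure arguments and collapse to the template in the paragraph preceding the theorem.
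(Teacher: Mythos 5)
Your proposal is correct and follows essentially the same route as the paper: the same five non-implications, settled by the same two devices (appending adversary-chosen or adversary-irrelevant data to the transcript to break $MC$ or $RA$ without affecting the other game, and a deterministic receiver selection to break $F$ while preserving $MC$). The only cosmetic difference is that for $MC \not\Rightarrow RA$ the paper reuses the deterministic-selection counterexample (guess the fixed first element of $U_p$), whereas you reuse the protocol that publishes $U_a$; both work and are argued at the same level of rigor as the paper's own sketches.
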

\begin{proof}
    Refer to the following to see which proof applies to which relation.
    `$\to$' denotes that the row's notion implies the column's notion.
    (PX) denotes that the proof follows from transitivity and PX.
    \begin{center}
    \begin{tabularx}{0.6\columnwidth}{XXXX}\toprule
        & $MC$ & $F$ & $RA$ \\\midrule 
        $MC$  & = & P2 & P3 \\
        $F$   & P1 & = & P5 \\
        $RA$ & (P1) & $\to$ & = \\\bottomrule
    \end{tabularx}
    \end{center}
    For all non-implications between notions $X$ and $Y$, we need to proof that there exists a protocol that there exists a protocol which achieves $X$ but not $Y$.
    
    \textbf{P1.}
    Let $\Pi$ be an anycast protocol that achieves $F$.
    Let $\Pi^\prime$ be a protocol that behaves identical to $\Pi$, but publishes the message content after every anycast.
    $\Pi^\prime$ still achieves $F$ as $\mathcal{A}$ knows the anycast's message anyways (it was unambiguously chosen by $\mathcal{A}$).
    $\Pi^\prime$ does not achieve $MC$, as the published message allows $\mathcal{A}$ to trivially distinguish between scenarios (which only differ in the message to be anycast).
    
    \textbf{P2.}
    Let $\Pi$ be an anycast protocol that achieve $MC$. 
    Let $\Pi^\prime$ be a protocol that behaves identical to $\Pi$, but always selects the first $k$ users in $U_p$ as actual receivers.
    $\Pi^\prime$ still achieves $MC$:
    If one of the first $k$ users is corrupted, $\mathcal{A}$ must have sent the same message in both scenarios an cannot have an advantage in distinguishing.
    Otherwise, it makes no difference to $\mathcal{A}$ which users are selected as actual receivers.
    $\Pi^\prime$ does not achieve $F$, as $\mathcal{A}$ can return the first user of $U_p$ as her guess and always be right.
    
    \textbf{P3, P4.}
    Analogously to P2.
    
    \textbf{P5.}
    Let $\Pi$ be an anycast protocol that achieves $F$. 
    Let $\Pi^\prime$ be a protocol that behaves identical to $\Pi$, publishes $U_a$ after selection.
    $\Pi^\prime$ still achieves $F$, as the adversary receives no protocol output (including the publication of $U_a$) anyways.
    $\Pi^\prime$ does not achieve $WRA$, as now $\mathcal{A}$ does receiver protocol output and can select any user from the published $U_a$ to trivially win.
    
    \textbf{P6.}
    Analogously to P5.
\end{proof}

\section{Protocol}
\label{sec:prot}
Next, we propose \prot, a possible instantiation of anonymous anycast, to demonstrate that our defined notions of privacy are readily achievable.
\prot relies on a unicast channel that unlinks senders from their messages (i.e., the receiver or any outside observer does not learn which message was sent by whom).
This channel is used to provide the anycast sender with randomness, which is used to determine the actual receiver.

\paragraph{Prerequisites}
To send an anonymous anycast using \prot, the following prerequisites have to be met:
\begin{enumerate}
    \item [\textbf{P1.}]
        \prot requires an authenticated {\new and confidential} bidirectional unicast communication channel between the anycast sender and each possible receiver.
        We denote $s$ sending a message $m$ to $r$ over this channel as $\cauth{s,m,r}$.
        Candidates for $\cauth{}$ include the popular Signal messaging application\footnote{\url{https://signal.org} -- Accessed \today}, or email with S/MIME~\cite{RefSMIME}.
    \item [\textbf{P2.}]
        \prot requires a unidirectional unicast communication channel that achieves Sender-Message Pair Unlinkability $(SM)\overline{L}$ as well as confidentiality from every possible receiver to the unicast sender.
        We denote $s$ sending message $m$ to $r$ over this channel as $\canon{s,m,r}$.
        Candidates for $\canon{}$ include Nym (see \Cref{sec:bg:nym}).
\end{enumerate}

\subsection{Basic \prot}
The \prot protocol works in three distinct phases:
In phase $P_0$ (Init), the sender uses $\cauth{}$ to send a \textsc{KeyReq} message to all possible receivers, notifying them of the pending anycast.
The \textsc{KeyReq} message contains instructions for the receivers to contact the sender via $\canon{}$. 

During the second phase $P_1$ (Key Submit), each possible receiver $u$ generates a random symmetric key $k_u$ and sends it to the sender using $\canon{}$.
We cannot expect all receivers to send their keys at exactly the same time, especially if the adversary has the ability to selectively delay packets.
Therefore, we assume that $\canon{}$ has the ability to compensate for delays up to some threshold $T$\footnote{
    The exact value for $T$ depends on the protocol used to initialize $\canon{}$.
}. 
If the keys are delayed within this threshold, $\canon{}$ will still unlink them from their sender.
If the threshold is exceeded and the anycast sender has not received all the keys, it terminates the protocol run.

The third phase $P_2$ (distribution) begins once the sender has received a key from each possible receiver.
{\new First, the sender verifies that all keys are unique.
Then,} the sender chooses a random $k$ from the received keys and uses it to encrypt the message $m$ to be anycast along with a publicly known \textit{tag} which is used to check for correctness after decryption.
The resulting ciphertext is then distributed to all possible receivers using $\cauth{}$.
Each receiver decrypts the ciphertext with their $k_u$ and checks if the revealed \textit{tag} matches the correct one.
If so, the receiver knows that she has been selected as the actual receiver and saves the message.
The \textit{tag} does not serve any privacy or security purpose, it is only used to determine if the message was correctly decrypted {\new in cases where it is not obvious from the revealed plaintext}.
So it can be a fixed byte that is hard-coded into the protocol.
\Cref{fig:atu} visualizes a simplified run of \prot.

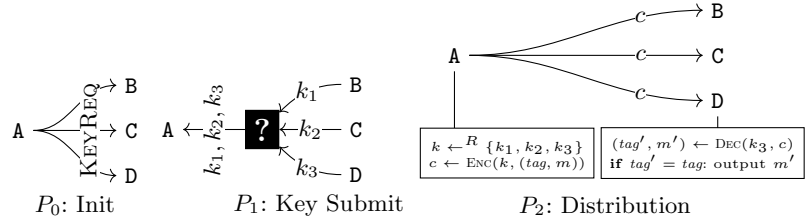
\begin{figure}[t]
    \centering
        \begin{tikzpicture}
        \node[] (s) at (0,0) {\texttt{A}};
        \node[] (r0) at (1.5,0.6) {\texttt{B}};
        \node[] (r1) at (1.5,0) {\texttt{C}};
        \node[] (r2) at (1.5,-0.6) {\texttt{D}};

        \draw[->] (s) to[out=0,in=180] (r0);
        \draw[->] (s) to[out=0,in=180] (r2);
        \draw[->] (s) to[out=0,in=180] node[midway,fill=white,inner sep=0,xshift=0.2cm,font=\small,rotate=90] {\textsc{KeyReq}} (r1);

        \node[] (l) at (0.75,-1) {$P_0$: Init};
    \end{tikzpicture}
    \begin{tikzpicture}
        \node[] (s) at (0,0) {\texttt{A}};
        \node[] (r0) at (2.5,0.6) {\texttt{B}};
        \node[] (r1) at (2.5,0) {\texttt{C}};
        \node[] (r2) at (2.5,-0.6) {\texttt{D}};
        
        \node[draw,fill=black,text=white,font=\large] (a) at (1.25,0) {\textbf{?}};
        \draw[->] (r0) to[out=180,in=45] node[midway,fill=white,inner sep=0] {$k_1$} (a);
        \draw[->] (r1) to[out=180,in=0] node[midway,fill=white,inner sep=0] {$k_2$} (a);
        \draw[->] (r2) to[out=180,in=-45] node[midway,fill=white,inner sep=0] {$k_3$} (a);
        \draw[->] (a) to[out=180,in=0] node[midway,fill=white,inner sep=0,rotate=90] {$k_1,k_2,k_3$} (s);

        \node[] (l) at (2,-1) {$P_1$: Key Submit};
    \end{tikzpicture}
    \begin{tikzpicture}
        \node[] (s) at (0,0) {\texttt{A}};
        \node[] (r0) at (3.5,0.6) {\texttt{B}};
        \node[] (r1) at (3.5,0) {\texttt{C}};
        \node[] (r2) at (3.5,-0.6) {\texttt{D}};

        \draw[] (s) --++ (0,-1.0);
        \node[draw,fill=white,xshift=0.7cm,align=center,font=\tiny] (sl) at (0,-1.3) {
            $k\gets^R \{k_1,k_2,k_3\}$\\
            $c\gets \textsc{Enc}(k,(\textit{tag}, m))$
        };
        \draw[->] (s) to[out=0,in=180] node[near end,fill=white,inner sep=0] {$c$} (r0);
        \draw[->] (s) to[out=0,in=180] node[near end,fill=white,inner sep=0] {$c$} (r1);
        \draw[->] (s) to[out=0,in=180] node[near end,fill=white,inner sep=0] {$c$} (r2);
        \draw[] (r2) --++ (0,-0.5);
        \node[draw,fill=white,xshift=-1.2cm,align=center,font=\tiny] (rl) at (4.5,-1.3) {
            $(\textit{tag}^\prime, m^\prime) \gets \textsc{Dec}(k_3,c)$\\
            \textbf{if} $\textit{tag}^\prime = \textit{tag}$: output $m^\prime$
        };

        \node[] (l) at (2,-2) {$P_2$: Distribution};
    \end{tikzpicture}
    \caption{
        Simplified \prot protocol run.
        \colorbox{black}{\color{white}?} unlinks senders from their messages.
    }\label{fig:atu}
\end{figure}

If the sender wants to anycast to more than one possible receiver, $P_2$ can be repeated $n$ times, discarding previously selected keys.
To send subsequent messages to the same actual receiver, the sender can use the same encryption key as for the initial message and multicast the resulting ciphertext again.

\subsection{Defending against Active Adversaries}
\label{sec:prot:active}
The basic \prot protocol described in the previous section protects against passive adversaries:
$\cauth{}$ ensures that the adversary cannot link keys to receivers, while the encrypted broadcast in phase $P_2$ ensures that the adversary cannot identify the actual receiver from the message sent to her.
Basic \prot, however, cannot achieve confidentiality against \emph{active} attacks.
Consider the following attack:
\begin{enumerate}
    \item During phase $P_1$, $\mathcal{A}$ discards some fraction or even all of the temporary keys submitted to the sender and replaces them with self-chosen ones.
    \item During phase $P_2$, $\mathcal{A}$ intercepts the ciphertext.
        If the sender (unknowingly) chose one of $\mathcal{A}$'s keys, then she can decrypt it and break confidentiality.
\end{enumerate}

If possible receivers add a digital signature to their temporary keys, then $\mathcal{A}$ is no longer able to exchange them for their own keys without the sender noticing.
In our setting, the signature should only reveal that the communication partner is part of the set of possible receivers, not her concrete identity.
To do this, we can use a linkable ring signature scheme~\cite{RefRingSig2,RefRingSig3,RefRingSig4}.
See \Cref{sec:bg:ring_signature} for background on linkable ring signatures. 

To protect against \emph{external} active adversaries, phases $P_0$ and $P_1$ are updated as follows:
In $P_0$, the anycast sender runs \textsc{Sig.Setup} and distributes the public parameters $pp$ as part of the \textsc{KeyReq} message.
After receiving $pp$, each possible receiver $u$ executes \textsc{Sig.KeyGen} to generate its own signing key pair $(vk_u,sk_u)$.
$sk_u$ is stored for future use and $vk_u$ is sent to the anycast sender using $\cauth{}$.
After receiving a verification key from each possible receiver, the anycast sender assembles $R\gets \{vk_1,\dots,vk_l\}$ and sends it to each possible receiver using $\cauth{}$.
Each receiver checks if $R$ contains its verification key and, if so, saves $R$ for later use.
If it does not, the sender is assumed to be malicious and the receiver is dropped from the anycast.

In $P_1$, each possible receiver $u$ generates a signature $\sigma_u$ for its temporary key $k_u$ by executing \textsc{Sig.Sign}$(sk_u,k_u,R)$. 
The receiver $u$ then sends $(k_u,\sigma_u)$ to the anycast sender using $\canon{}$.
Finally, the anycast sender executes \textsc{Sig.Verify}$(\sigma_u,k_u,R)$ on each received key to ensure that all keys were generated by someone within the set of possible receivers (and not an external adversary).

While the steps described above prevent an external adversary from inserting keys, a corrupted possible receiver could expose their private signature key to the adversary. 
Using this key, the adversary can still generate (and validly sign) multiple keys on behalf of the malicious receiver without the sender noticing (assuming the same number of keys from other receivers are dropped by the adversary).
To ensure that each possible receiver can only submit one key, we can use the \emph{linkability} property of the signature:
The anycast sender executes \textsc{Sig.Link}$(\sigma_u, \sigma_{u^\prime})$ for each pair of signatures received.
If \textsc{Sig.Link} returns `$1$' for at least one pair of signatures, the sender detects that \emph{some} malicious possible receiver has sent multiple keys to increase their chance of being selected.
In response, the sender terminates the protocol run.
If \textsc{Sig.Link} returns `$0$' in all cases, the sender proceeds as described above.

Refer to \Cref{alg:prot} for a pseudocode description of \prot.

\begin{algorithm}
\begin{algorithmic}
    \Procedure{Send}{$s,m,n,U_p$}
        \State $pp \gets \textsc{Sig.Setup}(1^\lambda)$
        \State $\textsc{KeyReq} \gets (\textit{hello}, pp)$
        \For{$u\in U_p$}
            \State $\cauth{s,\textsc{KeyReq},u}$ 
        \EndFor
        \State $R\gets \{\}$
        \While{$|R| < |U_p|$}
            \On{$vk$}{$u$}
                \State $R \gets R \cup \{vk\}$
            \EndOn
        \EndWhile
        \For{$u\in U_p$}
            \State $\cauth{s,R,u}$
        \EndFor
        \State $t \gets$ \textsc{Timer}.start()
        \State $K \gets \{\}$
        \State $\Sigma \gets \{\}$
        \While{$|K| < |U_p|$}
            \If{$t \geq T$}
                \State \Return
            \EndIf
            \On{$(k,\sigma)$}{$\bot$}\Comment{$\canon{}$ does not disclose sender.}
                \If{$\textsc{Sig.Verify}(\sigma,k,R) \neq 1$ {\new\textbf{or} $k \in K$}}
                    \State \Return
                \EndIf
                \State $K \gets K \cup \{k\}$
                \State $\Sigma \gets \Sigma \cup \{\sigma\}$
            \EndOn
        \EndWhile
        \For{$\sigma\in \Sigma$}
            \For{$\sigma^\prime \in \Sigma$}
                \If{$\sigma \neq \sigma^\prime \land \textsc{Sig.Link}(\sigma,\sigma^\prime) \neq 0$}
                    \State \Return 
                \EndIf
            \EndFor 
        \EndFor
        \For{$i \in \{1,\dots,n\}$}
            \State $k^\ast \gets^R K$
            \State $K \gets K\setminus \{k^\ast\}$
            \State $c \gets \textsc{Cipher.Enc}((\textit{tag},m),k^\ast)$
            \For{$u\in U_p$}
                \State $\cauth{s,c,u}$
            \EndFor
        \EndFor
    \EndProcedure
    \Procedure{Receive}{$u$}
        \On{\textsc{KeyReq}}{$s$}
            \State $(sk,vk) \gets \textsc{Sig.KeyGen}(pp)$
            \State $\cauth{u,vk,s}$
        \EndOn
        \On{$R$}{$s$}
            \If{$vk \not\in R$}
                \State \Return
            \EndIf
            \State $k \gets \textsc{Cipher.KeyGen}(1^\lambda)$
            \State $\sigma \gets \textsc{Sig.Sign}(sk,k,R)$
            \State $\canon{u,(k,\sigma),s}$
        \EndOn
        \On{$c$}{$s$}
            \State $(\textit{tag}^\prime, m^\prime) \gets \textsc{Cipher.Dec}(c,k)$
            \If{$\textit{tag}^\prime = \textit{tag}$}
                \State \Return $m^\prime$
            \EndIf
        \EndOn
    \EndProcedure
\end{algorithmic}
\caption{
    Sender and receiver behavior for \prot.
    $\textsc{Send}(s,m,n,U_p)$ is executed by user $s$ who wants to send message $m$ to $n$ users out of the set of possible receivers $U_p$.
    $\textsc{Receive}(u)$ is executed by receiver $u$ to receive possible anycast messages. 
    $\lambda$ and \textit{tag} are fixed protocol parameters known to all users.
    During the execution of \textsc{Send}, three lists are assembled:
    $R$ contains the possible receivers' signature public keys, $K$ contains the possible receivers' ephemeral keys, and $\Sigma$ contains their signatures.
}\label{alg:prot}
\end{algorithm}

{\new%
\begin{remark}{(Receiver Impersonation)}
    We do not limit validity of the submitted keys to support (very) asynchronous communication.
    This comes with some security drawbacks:
    For example, an actively malicious possible receiver who was previously a sender within this group of receivers could replace all submitted keys with known keys from the previous protocol run to ensure that she can decrypt the anycast.
    As we assume that senders are honest-but-curious, such attacks are out of scope in this work.
    However, to handle active attacks, one can add a signed timestamp to each submitted key and let the sender discard received keys with too out-of-date timestamps.
\end{remark}
}

\subsection{Security Analysis}
\label{sec:prot:proof}
Finally, we want to show that \prot achieves our privacy notions of message confidentiality, receiver anonymity, and fairness.

We start by proving that \prot achieves message confidentiality.
This is intuitively the case, as messages are encrypted such that only the actual receiver can unveil the plaintext.

\begin{theorem}
    \label{thm:atu:mc}
    \prot achieves message confidentiality against the adeversary $\mathcal{A}$.
\end{theorem}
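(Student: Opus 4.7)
The plan is to reduce message confidentiality of \prot to the IND-CPA security of the symmetric cipher used in phase $P_2$. The first step is to identify exactly which parts of the protocol transcript seen by $\mathcal{A}$ depend on the message $m_b$. Inspecting \Cref{alg:prot}, the only message-dependent value is the ciphertext $c \gets \textsc{Cipher.Enc}((\textit{tag},m_b),k^\ast)$ broadcast over $\cauth{}$ in phase $P_2$; the \textsc{KeyReq}, the ring $R$, the verification keys and, on honest receivers, the signed ephemeral keys sent over $\canon{}$ are all generated independently of $m_b$. Combined with the confidentiality of $\cauth{}$ (P1) and $\canon{}$ (P2), this means the adversary's view is a function of $m_b$ only through the single ciphertext $c$.

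The second step is to argue that the encryption key $k^\ast$ is hidden from $\mathcal{A}$ whenever $\mathcal{C}$ actually forwards protocol output. By the non-triviality condition of $\mathcal{G}_{MC}$, no $u \in U_a \cup \{s\}$ is corrupted; hence $k^\ast$ was generated fresh and uniformly by an honest possible receiver via \textsc{Cipher.KeyGen} and transmitted only via the confidential channel $\canon{}$ to an uncorrupted sender. The linkable ring signature and the duplicate-key check additionally rule out active substitution or replay of $k^\ast$ by external or malicious-receiver adversaries, as argued in \Cref{sec:prot:active}. Thus $k^\ast$ is information-theoretically independent of everything $\mathcal{A}$ sees except the challenge ciphertext $c$ itself.

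The third step is the formal reduction. Given a PPT $\mathcal{A}$ with non-negligible advantage in $\mathcal{G}_{MC}$, we build $\mathcal{B}$ against the multi-query IND-CPA game for \textsc{Cipher}. $\mathcal{B}$ simulates the entire \prot execution honestly: it runs \textsc{Sig.Setup}/\textsc{Sig.KeyGen} for all honest users, answers corruption queries by handing over the simulated internal state, and carries out all $\canon{}$/$\cauth{}$ deliveries. When processing a challenge $(s,\{m_0,m_1\},n,U_p)$ for which the non-triviality condition holds, $\mathcal{B}$ chooses the random key index as in the real protocol but, instead of encrypting locally with $k^\ast$, forwards $((\textit{tag},m_0),(\textit{tag},m_1))$ to its IND-CPA oracle and uses the returned ciphertext as $c$. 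In aborted rounds (some $u \in U_a$ is corrupted, or $s$ is corrupted), $\mathcal{B}$ suppresses the output just like $\mathcal{C}$ and does not issue an oracle query. For $n > 1$ the sender performs $n$ encryptions under distinct fresh keys; $\mathcal{B}$ handles these either via $n$ oracle queries or a standard hybrid over the encryption positions.

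The subtle point I expect to spend most care on is showing that the simulated transcript is distributed identically to the real one. Concretely, I must verify that (i) the key $k^\ast$ used by the honest sender is uniformly distributed, so that substituting the IND-CPA oracle's hidden key does not change the distribution; (ii) $\mathcal{A}$'s observations on $\canon{}$ for the honest receiver whose key is $k^\ast$ can be perfectly simulated without knowing the key, which follows from the confidentiality of $\canon{}$ (otherwise the reduction would need to commit to $k^\ast$ earlier); and (iii) the linkability and unforgeability of the ring signature hold over adaptive corruption queries, so that $\mathcal{B}$'s abort conditions match $\mathcal{C}$'s. Once these are in place, $\mathcal{B}$ outputs $\mathcal{A}$'s guess $b'$ and inherits her advantage, contradicting IND-CPA security and completing the argument.
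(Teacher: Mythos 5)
Your proposal follows essentially the same route as the paper's own (sketched) argument: rule out key injection via the unforgeability and linkability of the ring signature, observe that the only message-dependent output is the phase-$P_2$ ciphertext under a key known only to the uncorrupted sender and actual receivers, and reduce the remaining distinguishing advantage to the IND-CPA security of the cipher. You simply carry out the reduction in more detail than the paper does, and your handling of the non-triviality condition ($U_a \cup \{s\}$ uncorrupted) is correct.
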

\begin{proofsketch}
    We need to prove that no efficient adversary $\mathcal{A}$ can win game $\mathcal{G}_{MC}$ with a non-negligible advantage over random guessing.
    We only need to consider the case where $\mathcal{A}$ has not corrupted any possible receiver (nor the sender), as otherwise $\mathcal{A}$ does not receive any protocol output from $\mathcal{C}$ and consequently cannot have an advantage over random guessing.
    
    Assume there exists $\mathcal{A}$ who can break message confidentiality.
    $\mathcal{A}$ has two possible avenues to find out information about the message:
    \begin{enumerate}
        \item Attempt to submit her own keys to the sender in phase $P_1$.
            If one of the adversary's keys is selected, she can regularly decrypt the anycast message in phase $P_2$ and trivially win the game.
        \item Attempt to gain information about the message from observing the distribution in phase $P_2$.
    \end{enumerate}
    Recall that the keys in phase $P_1$ are signed using a linkable ring signature scheme.
    If $\mathcal{A}$ were able to submit a key with a valid signature, it could be used to break the signature scheme's unforgeability property.
    
    Recall that in phase $P_2$ the anycast message in transit is encrypted with keys known only to the sender and actual receivers.
    Thus, if $\mathcal{A}$ can still distinguish the messages based on the observed ciphertexts, it can be used to break the assumed IND-CPA-security of the underlying encryption scheme.
\end{proofsketch}

\begin{theorem}
    \label{thm:atu:wra}
    \prot achieves receiver anonymity against the adversary $\mathcal{A}$.
\end{theorem}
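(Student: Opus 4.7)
The plan is to show that the view of \(\mathcal{A}\) in \(\mathcal{G}_{RA}\) is, up to a negligible statistical distance, independent of which specific members of \(U_p\) were selected as actual receivers, so that any guess \(u^\ast\) can be correct only with probability at most \(n/|U_p|\). I would proceed by a sequence of hybrids that successively replace the underlying building blocks of \prot with idealized versions, each hybrid reducing to a property already assumed: Sender-Message Pair Unlinkability \((SM)\overline{L}\) of the anonymous channel, signer anonymity and linkability of the ring signature scheme, and the IND-CPA argument for the symmetric cipher already used in \Cref{thm:atu:mc}.

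First I would prune trivial-win cases: the challenger in \(\mathcal{G}_{RA}\) already discards the protocol output whenever a corrupted user lies in \(U_a\) or when all non-corrupted members of \(U_p\) lie in \(U_a\), so only runs in which at least one honest receiver inside \(U_a\) and at least one honest receiver outside \(U_a\) exist have to be considered. The traffic visible to \(\mathcal{A}\) over the authenticated channel during phases \(P_0\) and \(P_2\) (the \textsc{KeyReq} message, the ring \(R\), and the ciphertext \(c\)) is identical for every possible receiver; together with the confidentiality of the authenticated channel, these observations therefore cannot distinguish any two honest receivers, and the symmetric ciphertext itself is handled by the IND-CPA reduction already sketched for \Cref{thm:atu:mc}.

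Next I would argue that what \(\mathcal{A}\) observes in phase \(P_1\) over the anonymous channel also carries no identifying information about the key-to-receiver mapping. Concretely, suppose \(\mathcal{A}\) distinguishes two executions that differ only by swapping the keys contributed by two honest receivers \(u_i, u_j \in U_p\). I would construct a reduction that embeds those two executions as the two scenarios of the \((SM)\overline{L}\) game against the anonymous channel, converting \(\mathcal{A}\)'s distinguishing advantage into an advantage in that game. A second hybrid replaces the linkable ring signatures attached to the submitted keys with signatures produced by the signer-anonymity simulator; \prot's linkability check ensures that each honest receiver's key appears exactly once in \(K\), so this hybrid is sound. After these hybrids, the joint distribution of \(\mathcal{A}\)'s view conditioned on any fixed selection \(k^\ast\) is independent of which honest receiver originally submitted that key.

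In the final hybrid, the sender draws \(k^\ast\) uniformly from a pool whose owner identities are perfectly hidden from \(\mathcal{A}\) (even when the sender is corrupted and honest-but-curious, since it still follows the random-selection step), so the induced distribution of \(U_a\) is uniform over the honest members of \(U_p\). The probability that \(\mathcal{A}\) outputs \(u^\ast \in U_a\) is therefore bounded by \(n/|U_p|\) plus the sum of the negligible distinguishing advantages accumulated across the hybrid steps. The step I expect to be most delicate is the \((SM)\overline{L}\) reduction: since \(\mathcal{A}\) may issue polynomially many challenges and corruption queries adaptively, the reduction must carefully embed these into the single underlying \((SM)\overline{L}\) experiment and must answer later corruption queries consistently with the receiver-to-key assignments implicitly fixed in earlier rounds, without ever opening a user whose key still participates in an open challenge.
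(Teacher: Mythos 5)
Your overall structure differs from the paper's: you run a standard game-hopping argument (idealize $\canon{}$ via $(SM)\overline{L}$, idealize the ring signatures via signer anonymity, reuse the IND-CPA step from \Cref{thm:atu:mc}, then conclude that the view is independent of the key-to-receiver assignment so the success probability collapses to $n/|U_p|$). The paper instead enumerates the adversary's capabilities from \Cref{sec:prob:adv} --- passive observation, timing, each flavor of active interference, and user corruption --- and argues for each that it yields no advantage. Your hybrid decomposition is the cleaner and more conventional way to organize the passive and honest-but-curious-corruption part of the argument, and your closing remark about embedding polynomially many adaptive challenges into a single $(SM)\overline{L}$ experiment identifies a subtlety the paper glosses over entirely.

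However, there is a concrete gap: your argument only covers an adversary who \emph{observes} an honestly executed run, whereas the stated adversary model explicitly permits dropping, delaying, modifying, inserting, and replaying packets. A hybrid argument over the distribution of the honest transcript does not by itself rule out that an active manipulation changes which transcript is produced in a way correlated with receiver identity. The most important missing piece is the key-substitution attack of \Cref{sec:prot:active}: an adversary who drops honest keys in $P_1$ and injects her own would know that the injected keys cannot belong to honest receivers, biasing her guess (and in the extreme letting her decrypt $c$ and identify reactions). Defeating this requires the \emph{unforgeability} of the linkable ring signature scheme, which you never invoke --- you use linkability and signer anonymity only. You also do not argue why drops, delays beyond the threshold $T$, modifications, and replays are harmless; the paper's answer is that each either aborts the run before $U_a$ is selected (sender waits for exactly $|U_p|$ validly signed, pairwise unlinked keys) or elicits no outwardly observable reaction from receivers. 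Folding these cases into your hybrid framework --- e.g., a preliminary hybrid that aborts whenever any message accepted by the sender in $P_1$ was not produced by an honest possible receiver, justified by unforgeability plus linkability --- would close the gap.
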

\begin{proofsketch}
    Our goal is to show that there is no efficient $\mathcal{A}$ who has an advantage over random guessing in winning the $\mathcal{G}_{RA}$ game.
    To do so, we iterate through all of $\mathcal{A}$'s abilities as listed in \Cref{sec:prob:adv} and argue that none of them helps her to gain an advantage.
    
    \begin{itemize}
        \item \textit{Passive Observation.}
            Passive observation allows $\mathcal{A}$ to analyze incoming and outgoing packets anywhere in the network.
            During phase $P_0$, the sender sends an identical \textsc{KeyReq} package to every possible receiver.
            As all packages are identical, they cannot contain information about any actual receiver.
            We can analogously argue for phase $P_2$, where the sender sends an identical ciphertext to all possible receivers.
            In phase $P_1$, each possible receiver sends a unique key to the sender.
            If $\mathcal{A}$ were able to track who sends which key, she could identify the actual receiver based on their key and the ciphertext from phase $P_2$.
            However, if this were the case, $\mathcal{A}$ would also be able to break $(SM)\overline{L}$ for the anonymous unicast channel, which contradicts our assumptions.
        \item \textit{Timing.}
            $\mathcal{A}$ can time sending behavior of any user.
            In phases $P_0$ and $P_2$, packets are sent simultaneously by the sender to the receiver via a multicast message.
            In $P_2$, the sender selects a random key from the received ones and encrypts using the selected key prior to sending.
            As all keys randomly chosen and of equal length, we can assume that this selection and encryption do not vary in the time it takes based on which key is selected.
            If $\mathcal{A}$ were able to utilize timing in $P_1$ to identify actual receivers, she could also break $(SM)\overline{L}$ for the anonymous unicast channel.
        \item \textit{Active Interference.}
            $\mathcal{A}$ can actively interfere (i.e., drop, delay, modify, insert, and replay) with arbitrary packets.
            We can rule out that any active interference helps $\mathcal{A}$ in linking keys to receivers in phase $P_1$, as this would break $(SM)\overline{L}$ in the underlying unicast protocol.
            \begin{itemize}
                \item \textit{Drop.}
                    If $\mathcal{A}$ drops \textsc{KeyReq} messages, the receiving clients are not informed about being possible receivers and will not participate further in the protocol.
                    As the sender will only choose actual receivers if she has received the expected number of keys, the anycast will not be executed.
                    The same behavior occurs if $\mathcal{A}$ drops keys in $P_1$.
                    If $\mathcal{A}$ drops ciphertexts in $P_2$, the actual receivers may not receive the message.
                    However, as we assume that receivers show no outward reaction to received (or not received) data, this does not reveal any information about the identity of possible receivers to $\mathcal{A}$.
                \item \textit{Delay.}
                    Delays of packets other than the receivers' keys have no effect other than prolonging the protocol execution, as users wait for all expected packets to arrive before continuing with the execution.
                    Delays of receivers' keys within the threshold $T$ disclose no additional information as we assume that $\canon{}$ can compensate for these delays.
                    Delays in excess of $T$ cause the anycast sender to terminate the run prior to selecting the actual receiver and thus cannot reveal any information about the actual receiver to $\mathcal{A}$.
                \item \textit{Modify.}
                    Due to the use of MACs, \textsc{KeyReq}s cannot be modified without detection.
                    If keys in $P_1$ or ciphertexts in $P_2$ are modified, actual receivers might not be able to successfully decrypt the message.
                    However, as we assume no external reaction from the receivers, this does not reveal any information to $\mathcal{A}$.
                \item \textit{Insert.}
                    In phase $P_0$, $\mathcal{A}$ cannot insert further valid \textsc{KeyReq}s due to the use of MACs.
                    In phase $P_1$, $\mathcal{A}$ cannot insert further keys, as the sender only proceeds with the anycast if the expected number of keys arrives.
                    If $\mathcal{A}$ drops keys to insert its own ones, it is not able to do so without detection due to the use of linkable ring signatures for the key messages.
                    In phase $P_2$, $\mathcal{A}$ may insert new ciphertexts, but the receiving clients will not show any outward reaction.
                \item \textit{Replay.}
                    Replaying \textsc{KeyReq}s in phase $P_0$ will only lead to the receiving clients discarding any extra ones.
                    Replaying keys in $P_1$ will result in the sender receiving more keys than expected and not executing the anycast as a result.
                    Replaying ciphertexts in $P_2$ elicits no reaction from the receivers by assumption.
            \end{itemize}
        \item \textit{User Corruption.}
            $\mathcal{A}$ can corrupt the sender as well as a fraction of possible receivers.
            Recall that $\mathcal{A}$ only receives protocol output (and therefore has a chance to not randomly guess) in $\mathcal{G}_{RA}$ if no actual receiver is corrupted and there exists at least one other possible receiver who is not corrupted.
            
            By corrupting a receiver, $\mathcal{A}$ gains access to their internal state, including all key material.
            However, as the anycast message is only encrypted with the keys of actual receivers (who are not corrupted) and there remain honest possible receivers, $\mathcal{A}$ cannot use the gained information to determine which honest clients are actual receivers versus non-chosen possible receivers. 

            $\canon{}$'s achievement of Sender-Message Pair unlinkability ensures that a corrupted anycast sender cannot link received keys to their owner.
            The ring signature scheme's signer anonymity property ensures that the corrupted anycast sender cannot link based on the key's ring signature.
    \end{itemize}
    Thus, we have argued that none of $\mathcal{A}$'s abilities help her in winning $\mathcal{G}_{RA}$.
\end{proofsketch}

As fairness is implied by receiver anonymity (\Cref{thm:imp1}), \Cref{thm:atu:wra} also implies that \prot achieves fairness.

\section{Evaluation}
\label{sec:eval}
We evaluate the performance of \prot concerning two metrics:
\begin{enumerate}
    \item Computational overhead for senders and receivers (\Cref{sec:eval:mb}).
    \item End-to-end latency between sender and actual receiver (\Cref{sec:eval:lat}). 
\end{enumerate}
We suspect that the end-to-end latency of \prot largely depends on that of the underlying anonymous channel and hence benefits from improvements in this field that is in dynamic development right now.
To get a more robust view of latency, we are interested in long-term latency data on Nym, one possible instantiation of \prot's anonymous channel.
We analyze this data in \Cref{sec:eval:nym}.

\subsection{Computational Overhead}
\label{sec:eval:mb}
% Research Question and Motivation
We want to determine how the computational overhead of \prot for sender and receivers scales with the number of possible receivers.
A \prot protocol run can be divided into several distinct steps.
Each step consists of different cryptographic operations and might scale differently with the number of receivers.
Thus, we construct a series of microbenchmarks with which we can measure each step separately.

% Experiment Setup
We have split protocol execution between sender and receivers into six distinct steps to gain insight into the contribution to the overhead of different components:
\begin{itemize}
    \item \textit{Key Generation.}
    Step KG is executed by each receiver in phase $P_0$ and entails the generation of one linkable ring signature key pair.
    \item\textit{Sign.}
    Step SIG is executed by each receiver in phase $P_1$ and entails the generation of a 32-byte AES key and the signing of it.
    \item\textit{Verfiy.}
    Step VER is executed by the anycast sender in phase $P_1$.
    During this step, the sender verifies each received ring signature.
    \item\textit{Link.}
    Step LINK is executed by the anycast sender in phase $P_1$.
    During this step, the sender tests for each pair of signatures if they are linked.
    \item\textit{Select \& Encrypt.}
    Step S\&E is executed by the anycast sender in phase $P_2$ and entails the selection of receiver key(s) as well as the encryption of the message.
    \item\textit{Decrypt \& Compare.}
    Step D\&C is executed by each receiver in phase $P_2$ and entails the decryption of the received ciphertext as well as the comparison of the revealed tag with a fixed value.
\end{itemize}
We have implemented a prototype in go, which can be found on GitHub\footnote{
    \url{https://github.com/coijanovic/anycast-bench}
}.
For all our measurements, we use a virtual machine running Ubuntu 22.04.1 on a server with an AMD EPYC 7502 Processor, 2 assigned cores, and 4GB of RAM.
We use \texttt{lirisi}\footnote{\url{https://github.com/zbohm/lirisi} -- Accessed \today} as our linkable ring signature, which implements a signature scheme proposed by Liu et al.~\cite{RefRSImp}.
For symmetric encryption, go's standard \texttt{crypto/aes} package is used.
To determine the impact of the number of possible receivers on the computational overhead, we execute each step for 10, 20, and 40 possible receivers and one actual receiver.
In all experiments, a 1024-byte message of random content is used.

% Expectation
\paragraph{KG}
We expect the computational overhead of step KG to be independent of the number of possible receivers:
Each receiver has to generate one standard elliptical curve key pair, regardless of the total number of receivers.
\paragraph{SIG}
We expect computational overhead for step SIG to linearly scale with the number of possible receivers, as the signature depends on each element of the ring.
In general, we expect the signature generation to require the greatest share of computational time for the receiver.
\paragraph{VER}
We expect the computational overhead for step VER to quadratically scale with the number of possible receivers:
\begin{itemize}
    \item The number of signatures to verify grows linearly with the number of receivers
    \item Verification of \texttt{lirisi}'s signatures requires computations for each of the ring's public keys.
    The number of public keys in the ring grows linearly with the number of receivers.
\end{itemize}
In general, we expect the signature verification to require the greatest share of computational time for the sender.
\paragraph{LINK}
We expect the computational overhead for step LINK to scale quadratically with the number of possible receivers, as every signature has to be compared to every other one.
\paragraph{S\&E}
We expect computational overhead for step S\&E to scale linearly with the number of possible receivers, as the set of keys to select from grows with the number of receivers.
\paragraph{D\&C}
Finally, we expect computational overhead for step D\&C to be independent of the number of possible receivers, as the decryption times of block ciphers (such as AES) only depend on the size of the ciphertext~\cite{RefAESPerf2} and the key size~\cite{RefAESPerf1}.
Both parameters are identical in all experiments.

% Results
We provide mean values as well as standard deviation of 100 individual measurements for each step in \Cref{tab:microbench}.
One can see that the ring signature verification has by far the biggest impact on the computational overhead for the sender.
For receivers, the largest contributor to computational overhead is also the linkable ring signature scheme, which is used to sign the keys.

% Interpretation
For steps KG, SIG, VER, and D\&C the measurements confirm our expectations.
Our measurements for step LINK show superlinear, but no clear quadratic growth:
For 10 receivers, linking takes $0.8{\mu}s$ and the time increases by a factor of about $3.3$ for every doubling of the receiver.
In the \texttt{lirisi} signature scheme, each linking requires equality checks of two signature components.
We suspect that compiler optimizations are to blame for the observed discrepancy.
Step S\&E takes about $1.8{\mu}s$ independent of the number of possible receivers, which also does not match our expectations.
Note the comparatively high standard deviation in our measurements of this step.
Due to the low execution time, CPU scheduling can have a large impact on the measurement.
We thus do not recommend relying on these measurements as absolute values but rather as a comparison to the other steps.
We suspect that the linear growth due to the key selection process is hidden by this deviation and is revealed only with (much) larger numbers of receivers.
The same caveat about measurement deviation also applies in step D\&E.

% Conclusion
As we suspected, signature generation and verification are responsible for the largest part of computational overhead by far.
Verification further scales quadratically in the number of receivers, which limits scalability.
On the bright side, we have seen that \prot's other computational steps are very lightweight.
Future---more efficient---linkable ring signature schemes, therefore, have the potential to also make \prot equally more efficient.

Note that we have evaluated anycast to a \emph{single} actual receiver.
For $n$ actual receivers, the computational overhead for steps S\&E and D\&C increases by a factor of $n$, as it has to be repeated for every actual receiver.
The overhead for the other steps is independent of the number of actual receivers.

\begin{table}[]
    \centering
    \begin{tabularx}{\columnwidth}{rXXX}\toprule
    & \multicolumn{3}{c}{\textbf{\# Possible Receivers}}\\
    & 10 & 20 & 40\\\midrule
    \textbf{KG} & $15.79 {\mu}s \pm 0.95$ & $15.96 {\mu}s \pm 1.02$ & $15.88 {\mu}s \pm 1.17$\\
    \textbf{SIG} & $2.70 ms \pm 0.12$ & $5.45 ms \pm 0.19$ & $10.85 ms \pm 0.22$\\
    \textbf{VER} & $28.07 ms \pm 1.12$ & $109.78 ms \pm 2.84$ & $432.18 ms \pm 5.76$\\
    \textbf{LINK} & $0.80 {\mu}s \pm 0.175$ & $2.67 {\mu}s \pm 0.45$ & $8.62 {\mu}s \pm 0.95$\\
    \textbf{S\&E} & $1.82 {\mu}s \pm 0.59$ & $1.69 {\mu}s \pm 0.39$ & $1.63 {\mu}s \pm 0.38$\\
    \textbf{D\&C} & $0.85 {\mu}s \pm 0.27$ & $0.75 {\mu}s \pm 0.14$ & $0.74 {\mu}s \pm 0.12$\\\bottomrule
    \end{tabularx}
    \caption{
        Results of \prot microbenchmarks.
        All results are mean values of 100 runs $\pm$ standard deviation.
    }
    \label{tab:microbench}
\end{table}

\subsection{Nym Latency Measurements}
\label{sec:eval:nym}
% Motivation
An obvious instantiation of \prot's anonymous channel is Nym, as it is a recent design and provides a free public instance.
There are two hypotheses we want to test to determine Nym's suitability as an anonymous channel for \prot:
\begin{enumerate}
    \item Due to its mixnet architecture, we suspect that Nym's latency is in general much higher that that of non-anonymous channels.
        If Nym's latency accounts for a large share of \prot's end-to-end latency, \prot's usability directly depends on Nym.
    \item Due to the large number of possible paths a message can take through Nym's mix network, we suspect that Nym's latency is subject to high variance.
\end{enumerate}
We test these hypotheses by conducting a long-term measurement study of Nym's latency.
% Setup
Over 32 days, we sent a total of 3745 32-byte messages of random content between two Nym WebSocket clients\furl{https://nymtech.net/docs/stable/integrations/websocket-client} with the default configuration and of the latest version (v1.0.2 for early measurements and v1.1.0 for later ones).
For each message, we logged the time at which it was passed to the sending client and the time when it was output by the receiving client.
Our dockerized experiment setup can be found on GitHub\furl{https://github.com/coijanovic/nym-latency-observer}.

% Disclaimer
During the measurement period, our server experienced a short network outage, which resulted in very high latency for a small number of messages.
As this was a local issue and does not reflect the latency caused by Nym itself, we discard all outliers in our data with latency over ten seconds.
We further want to highlight a 5-day gap in our measurements (see \Cref{fig:days}).
During this time, the Nym platform was updated and communication between our clients was not possible.

% Results
We observed a median end-to-end latency of 448.51 ms with a 95th percentile latency of 861.46 ms and a 99th percentile latency of 3026.03 ms. 
The minimum observed latency was 99.28 ms, while the maximum was 9594.46 ms.
The standard deviation was measured at 667.52.
\Cref{fig:weekdays,fig:hours,fig:days} present latency measurements broken down by day of the week, day of the month, and hour of the day respectively.

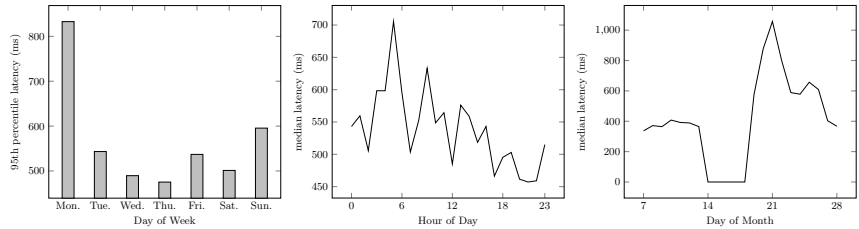
\begin{figure}
    \centering
    \begin{subfigure}[t]{0.30\textwidth}
        \begin{tikzpicture}[thick, scale=0.45]
\begin{axis}[
        symbolic x coords={},
        symbolic x coords={Mon., Tue., Wed., Thu., Fri., Sat., Sun.},
        xtick=data,
        xlabel={Day of Week},
        ylabel={95th percentile latency (ms)},
        legend pos=north west,
        legend style={draw=none}]
    ]
    \addplot[ybar,black,fill=black!25] coordinates {
        (Mon.,832.8660789858491)
        (Tue.,543.259425733777)
        (Wed.,489.47196513798707)
        (Thu.,475.15422054166663)
        (Fri.,536.912603835034)
        (Sat.,501.18264904863224)
        (Sun.,595.4509077763819)
    };
\end{axis}
\end{tikzpicture}
        \caption{
            Per weekday.
        }
        \label{fig:weekdays}
    \end{subfigure}
    \begin{subfigure}[t]{0.30\textwidth}
        \begin{tikzpicture}[thick, scale=0.45]
\begin{axis}[
        %symbolic x coords={0,6,12,18},
        %extra x tick style={ticklabel style={fill=white,font=\scriptsize}},
        %extra x ticks={0, 6, 12, 18},
        xtick={0,6,12,18,23},
        xlabel={Hour of Day},
        ylabel={median latency (ms)},
        legend pos=north west,
        legend style={draw=none}]
    ]
    \addplot[black] coordinates {
        (0,543.0583969226191)
        (1,559.5866057068965)
        (2,505.71219826829264)
        (3,598.372662802721)
        (4,598.3382197109374)
        (5,705.4578329153846)
        (6,595.1884215149254)
        (7,503.55648331578936)
        (8,552.7518211449275)
        (9,632.8579458958334)
        (10,548.6546284718308)
        (11,564.3662840551724)
        (12,485.2244542307692)
        (13,576.0300260812501)
        (14,558.786564)
        (15,518.6122875176471)
        (16,542.9754268705883)
        (17,466.2631563888889)
        (18,495.41022964071857)
        (19,502.88828799397595)
        (20,461.5306787939394)
        (21,457.19071851807234)
        (22,459.04455454437874)
        (23,515.073644006024)
    };
\end{axis}
\end{tikzpicture}
        \caption{
            Per hour of the day.
        }
        \label{fig:hours}
    \end{subfigure}
    \begin{subfigure}[t]{0.30\textwidth}
        \begin{tikzpicture}[thick, scale=0.45]
\begin{axis}[
        %symbolic x coords={},
        %symbolic x coords={Mon., Tue., Wed., Thu., Fri., Sat., Sun.},
        xtick={7,14,21,28},
        xlabel={Day of Month},
        ylabel={median latency (ms)},
        legend pos=north west,
        legend style={draw=none}]
    ]
    \addplot[black] coordinates {
        (7,337.0275501969696)
        (8,371.36731684027774)
        (9,364.43702006944443)
        (10,407.81198759245285)
        (11,392.05525155208335)
        (12,388.86096732638885)
        (13,364.83036506849317)
        (14,0)
        (15,0)
        (16,0)
        (17,0)
        (18,0)
        (19,576.0120548223684)
        (20,877.6067672905027)
        (21,1057.0122632328769)
        (22,800.5467490044446)
        (23,588.8199870104529)
        (24,578.2249300780669)
        (25,657.1532194156978)
        (26,609.2583962316603)
        (27,403.55763261594205)
        (28,367.1386413409091)
    };
\end{axis}
\end{tikzpicture}
        \caption{
            Per day of the month.
        }
        \label{fig:days}
    \end{subfigure}
    \caption{Median end-to-end latency in Nym.}
\end{figure}

% Interpretation
Our collected data corroborates the hypothesis that Nym's latency is much higher than that of non-anonymous communication:
While standard internet latency is commonly around 20 to 30 ms\furl{https://www.statista.com/statistics/1244676/}, the use of Nym increased latency more than tenfold to a median of 0.45 s.

We can also confirm the hypothesis that Nym's latency is highly variable:
The median latency we observed on Mondays was nearly twice as high as the latency on Thursdays.
The minimum and maximum observed latency also differed by a factor of 100.
We can explain Nym's high latency variance by its architecture:
Clients' messages pass through multiple mix servers before arriving at the sender.
If a path is chosen where network latency is high between servers and servers are under high load, latency will naturally be higher than with close and idle servers.

% Conclusion
We have seen that Nym's latency varies significantly over time.
We do not claim to be able to show any long-term trends in Nym's latency (e.g., `latency is higher on Mondays'), as we did not collect data over a long enough period for that.
With sub-second latency for 95\% of cases, we still conclude that Nym is suited for use in \prot, at least in settings without real-time communication requirements.

\subsection{End-To-End Latency}
\label{sec:eval:lat}
% Research Question and Motivation
In \Cref{sec:intro}, we suggested that anonymous anycast can be used by political activists to implement a dead man's switch.
We expect the dead man's notification to be similar in size and expected latency to instant messaging.
ITU Recommendation G.1010\furl{https://www.itu.int/rec/T-REC-G.1010-200111-I} states that ``delays of several seconds are acceptable'' for instant messaging applications.
We thus want to determine how latency in \prot scales with the number of receivers and size of the message and if it falls within the ITU's recommended latency limits.

% Experiment Setup
We define the end-to-end latency as the time difference between the start of the sending client and the plaintext output of the actual receiver.
To determine the impact of the number of receivers on the end-to-end latency, we will run an experiment with a fixed message size of 512 Byte and vary the number of receivers between 4 and 16. 
To determine the impact of the message size on the end-to-end latency, we will run an experiment with a fixed number of receivers of 8 and vary the message size between 512 Byte and 2 KB.
To enable these experiments, we implemented a prototype of \prot in go\furl{https://github.com/coijanovic/panini}.
The authenticated channel was instantiated with TCP connections over which ecdsa-signed messages were sent.
The anonymous channel was instantiated with Nym (WebSocket client version v1.1.1).
Each client ran in a separate docker container with its own Nym client.
One of the clients acted as the sender and the rest as possible receivers (of which \emph{one} was chosen as the actual receiver).
All containers ran on an AMD Ryzen 5 5600G with 32 GB of RAM.
We repeat each measurement 16 times and present the median of the observed latencies $\pm$ standard deviation.

% Expectations
We expect \prot's latency to be largely independent of both the number of receivers as well as the message size.
As we have shown in \Cref{sec:eval:mb}, computational times for senders and receivers are well below 0.5s.
We thus expect \prot's end-to-end latency to be dominated by the latency of Nym.
Message size only impacts phase $P_2$, where the message is encrypted, sent via the authenticated channel, and decrypted by the receivers.
We have already shown that en- and decryption requires computation on the order of microseconds, sending over the authenticated channel (Nym in this case) also should only add minimal latency.
The more receivers participate, the more connection over the anonymous channel have to be made.
While receivers can send their data in parallel, the sender has to wait for the slowest receiver before continuing the execution.
If there is variance in the latency of the anonymous channel, we can expect \prot's end-to-end latency to increase with the number of receivers.

% Results
We measured a median end-to-end latency of $0.71 s \pm 0.64$ for 4 receivers, $0.76 s \pm 0.34$ for 8 receivers, and $0.82 s \pm 2.13$ for 16 receivers.
For the message size experiments, we measured $0.65 s \pm 0.62$ end-to-end latency for 256 B message, $0.76 s \pm 0.34$ for 512 B, $0.84 s \pm 0.83$ for 1024 B, and $0.75 s \pm 0.54$ for 2048 B.

% Interpretation
As we expected, latency increases with the number of possible receives.
For the message size experiments, we suspect that Nym's latency variance is to blame for the unexpected results:
As sending 2048 B messages leads to lower latency than sending 1024 B messages, it seems unlikely that the message size itself is to blame.
As we have already seen in \Cref{sec:eval:nym}, Nym's latency fluctuates over time, depending e.g., on network utilization.
If the measurements for 2048 B messages were made during a period of lower utilization than the measurements for 1024 B messages, our results can be explained.
Finally, the high standard deviation we observed in our measurements can also be explained by Nym's latency variation.

% Conclusion
In summary, we have shown that---for up to 16 receivers and 2 KB messages---\prot achieves sub-second end-to-end latency and is therefore suitable for instant messaging applications according to the ITU's recommendation.
For our considered message sizes, we have determined that the anonymous channel is the bottleneck:
Nym's median latency of 0.45 s (see \Cref{sec:eval:nym}) accounts for nearly 60\% of \prot's end-to-end latency for 8 possible receivers.
While we have only evaluated \prot with Nym, we want to note that the two protocols are not inherently linked to each other.
If a future anonymous communication network that achieves sender-message pair unlinkability with lower latency is proposed, \prot can utilize it, lowering its end-to-end latency in turn.

\vspace{0.25cm}
If \prot is used in settings outside of instant messaging where the message size is much larger (e.g., to distribute multi-Gigabyte documents in whistleblowing), the ciphertext distribution in phase $P_2$ might become a bottleneck.
Recall that for 8 receivers and a 512 Byte message, we measured an end-to-end latency of 0.76 s.
Assume that the sender has a 100 Mb/s internet connection.
The time needed to distribute 512 Byte to 8 receivers over this connection is negligible.
Thus we can calculate the total latency $\Delta$ roughly as follows:
\[
    \Delta = 0.76s + \frac{\text{\#Receivers}\times\text{message size}}{\text{transfer rate}}
\]
We can see that for any message larger than 9.5 MB, distributing the ciphertext to 8 receivers over the 100 Mb/s connection requires more time than the remainder of \prot's execution.
However, this issue can be circumvented by pre-distributing an encrypted version of the data and using the anycast only to provide the required decryption key to the actual receiver.

\section{Conclusion}
\label{sec:conc}
In this paper, we have identified and formally defined message confidentiality, receiver anonymity, and fairness as the main privacy goals of anonymous anycast.
Based on our formal definitions, it is now possible to provide rigorous proof of privacy for anonymous anycast protocols.
We have further introduced \prot, the first protocol that enables anonymous anycast over readily available infrastructure.
We have provided proof that \prot fulfills all of our previously defined privacy goals.
In an in-depth empirical evaluation, we have shown that \prot only introduces minimal computational overhead for anycast senders and receivers and achieves end-to-end latency suitable for instant messaging.

%%
%% The next two lines define the bibliography style to be used, and
%% the bibliography file.
\bibliographystyle{splncs04}
\bibliography{ref.bib}

%%
%% If your work has an appendix, this is the place to put it.
%\appendix

\end{document}